\documentclass[10pt,twocolumn,twoside]{IEEEtran}
\IEEEoverridecommandlockouts

\usepackage{tikz}
\usetikzlibrary{positioning, arrows.meta, shapes.geometric}
\usepackage{multirow}

\usepackage{float}
\usepackage{booktabs}
\usepackage{threeparttable}
\usepackage{array}
\usepackage{url}
\usepackage{cite}
\usepackage{pifont}

\usepackage{amsmath,amssymb,amsfonts,amsthm,mathtools}
\usepackage{graphicx, subcaption}
\usepackage{textcomp}
\usepackage{balance}
\usepackage{xcolor}
\usepackage[utf8]{inputenc}
\usepackage{algorithm, algpseudocode}
\newtheorem{proposition}{Proposition}
\theoremstyle{plain} 
\newtheorem{theorem}{Theorem}[section]   
\newtheorem{lemma}[theorem]{Lemma}

\theoremstyle{definition} 
\newtheorem{definition}[theorem]{Definition}

\DeclareMathOperator*{\argmin}{arg\,min}

\def\BibTeX{{\rm B\kern-.05em{\sc i\kern-.025em b}\kern-.08em
    T\kern-.1667em\lower.7ex\hbox{E}\kern-.125emX}}
\usepackage{lipsum}

\title{\LARGE \bf 
Strategic Spatial Load Shifting and Market Efficiency
}

\author{Aron Brenner$^*$, Deepjyoti Deka$^\dagger$, Line Roald$^\ddagger$, Saurabh Amin$^{*}$
\vspace{-0.5cm}
\thanks{ 
$^*$Aron Brenner and Saurabh Amin are with the Laboratory for Information and Decision Systems, MIT, Cambridge, MA, USA. Emails: {\tt \{abrenner, amins\}@mit.edu}}
\thanks{ 
$^\dagger$Deepjyoti Deka is with the MIT Energy Initiative, MIT, Cambridge, MA, USA. Email: {\tt deepj87@mit.edu}}
\thanks{$^\ddagger$Line Roald is with the Department of Electrical and Computer Engineering, University of Wisconsin-Madison, Madison, WI, USA. Email: {\tt roald@wisc.edu}}
}

\begin{document}
\begingroup
\allowdisplaybreaks

\maketitle

\begin{abstract}
Large, spatially flexible electricity consumers such as data centers can reallocate demand across locations, influencing dispatch and prices in wholesale electricity markets. While flexible load is often assumed to improve system efficiency, this intuition typically relies on price-taking behavior. We study price-anticipatory spatial load shifting by modeling a large flexible consumer as a Stackelberg leader interacting with DC optimal power flow (DC-OPF) based market clearing. We show that decentralized, cost-minimizing load shifting need not align with system operating cost minimization, and that misalignment arises at boundaries between DC-OPF operating regimes, where small changes in load can induce discrete changes in marginal generators or congestion patterns. We evaluate strategic load shifting on the 73-bus RTS-GMLC test system, where findings indicate reductions in system operating cost in most hours, but misalignment in a subset of cases that are driven by redispatch at merit-order discontinuities. We find that these outcomes are primarily redistributive relative to a price-taking benchmark, reducing generator profits while lowering electricity procurement costs for both flexible and inflexible consumers, even in cases where total system operating costs increase.
\end{abstract}

\begin{IEEEkeywords}
Flexible demand, Data centers, Electricity markets, Locational marginal pricing, Transmission planning, Bilevel optimization
\end{IEEEkeywords}

\section{Introduction}
After two decades of relatively flat demand, electricity consumption in the United States is rising again, driven in large part by the rapid expansion of energy-intensive data centers \cite{eia2025}. Data centers accounted for roughly 4.4\% of total U.S. electricity consumption in 2023, a share projected to increase to up to 12\% by 2028 \cite{shehabi20242024}. Beyond their scale, a defining feature of large data centers is operational flexibility: computational workloads can be shifted across locations and time, allowing operators to reallocate electricity demand in response to prices, emissions, or other system signals. Such flexibility is already being actively exploited by hyperscale operators to reduce procurement costs and carbon footprints \cite{google_env_report,Meta_sust_report,ZhengEtal2020_joule,radovanovic2022carbon, SouzaEtal2023}.

A common intuition is that flexible demand improves system efficiency by shifting consumption toward locations or hours with abundant low-cost, low-carbon supply---effectively acting as ``virtual transmission'' that relieves congestion and increases renewable utilization \cite{ZhengEtal2020_joule,zhang2020flexibility}. Accordingly, many studies argue that demand flexibility can play a central role in the grid interconnection of large flexible loads, reducing the need for additional investment in generation and transmission capacity \cite{farmer2025doe,norris2025rethinking}. However, recent work highlights that the system-level consequences of decentralized flexibility are highly context dependent: long-run planning studies find possible cost reductions but ambiguous emissions impacts driven by investment responses \cite{senga2025}, while short-run studies show that emissions-guided shifting based on imperfect signals can increase total system emissions \cite{GorkaEtal2024}.

While previous literature raises potential inefficiencies arising from imperfect emissions signals or long-run investment effects, our focus is different: we isolate the role of \textit{strategic behavior}, treating large flexible consumers as \textit{price-anticipatory} participants in market clearing whose actions can alter marginal generators and congestion patterns, rather than as price-takers. In particular, we ask when such decentralized, strategic behavior reduces total system operating cost, and when it does not.

This work aims primarily to address the following question:
\begin{center}
    \textit{Through what mechanisms does decentralized, strategic spatial load shifting align with or conflict with the system-level cost objective?}
\end{center}
Our contributions are as follows.

\textbf{1. Analytical characterization of misalignment.}
We derive a necessary condition under which price-anticipatory spatial load shifting leads to misalignment between private procurement incentives and system operating cost minimization. We show that misalignment cannot arise within a single DC-OPF operating regime (Proposition~\ref{prop:sufficient_condition_for_alignment}) and occurs only when an optimal load shift positions the system at the boundary between active-set regions where marginal generators or binding transmission constraints change (Proposition~\ref{prop:necessary_condition_for_misalignment}).

\textbf{2. Mechanism-based illustrative example.}
We develop a three-zone network example that makes operating regime transitions and price discontinuities explicit, illustrating how strategic load shifting can reduce consumer procurement costs while increasing total system operating cost through discrete redispatch effects.

\textbf{3. Computational evaluation on benchmark test system.}
We evaluate price-anticipatory spatial load shifting on the 73-bus RTS-GMLC test system, showing that decentralized load shifting is aligned with system objectives in most hours, but increases system operating costs in a subset of cases driven by redispatch at merit-order discontinuities.

\section{Problem Formulation}
\label{sec:preliminaries}

We model the interaction between a single, spatially flexible consumer and the electricity market as a Stackelberg game in which the flexible consumer acts as the leader and the market operator as the follower. The consumer is assumed to have perfect information about the market clearing process and chooses load shifts anticipating their impact on dispatch and prices, subject to physical flexibility constraints. Following the consumer’s decision, the operator clears the market by solving a deterministic DC optimal power flow (DC-OPF) problem, yielding dispatch outcomes, system operating costs, and locational marginal prices (LMPs).

The flexible consumer represents a large load (such as a hyperscaler) whose spatial reallocation can affect congestion patterns and marginal generators, and is therefore modeled as price-anticipatory rather than price-taking. By contrast, the operator is assumed to be non-strategic and applies fixed market clearing rules without responding to consumer incentives on the short time scales considered here. Our model extends prior work on two-zone systems \cite{brenner2026strategicdatacenterload} to general network topologies with multiple congestion patterns. This bilevel formulation also shares the price-anticipatory structure of \cite{kazempour2015}; however, our modeling of spatial flexibility makes the two fundamentally different. There, a consumer exercises market power by \textit{bidding} below its marginal utility and withholding consumption to depress prices, whereas in our formulation, total demand is fixed by a load-shifting balance constraint, so market power is exercised purely through \textit{spatial reallocation}.

\subsection{Lower Level: DC-OPF}
\label{subsec:dcopf}

The lower-level problem models wholesale market clearing via a DC-OPF problem. The transmission network consists of buses $\mathcal{N}=\{1,\dots,n\}$ and transmission lines $\mathcal{L}=\{1,\dots,m\}$, with oriented node–edge incidence matrix $A\in\mathbb{R}^{m\times n}$ and diagonal susceptance matrix $B=\mathrm{diag}(b_1,\dots,b_m)$. Line flow limits are given by $\bar F\in\mathbb{R}^m_+$, and the reference (slack) bus is denoted by index $s\in\mathcal{N}$.

Generation is represented by a fleet of $k$ generators with marginal costs $c\in\mathbb{R}^k$ and capacity limits $\bar p\in\mathbb{R}^k_+$. The matrix $G\in\mathbb{R}^{n\times k}$ maps generators to buses, where $G_{ij}=1$ if generator $j$ injects at bus $i$. System load is given by a nominal demand vector $d\in\mathbb{R}^n_+$, decomposed into inflexible base load $d_{\mathrm{base}}$ and nominal flexible load $d_{\mathrm{flex}}$. For a load shift $\delta\in\mathbb{R}^n$, market clearing is obtained by solving a DC-OPF with realized load $d+\delta$:
\begin{subequations}\label{eq:dcopf}
\begin{align}
    V(\delta)
    \coloneqq
    \min_{p,\theta,f}
    \quad & c^\top p \label{eq:dcopf_new}\\
    \mathrm{s.t.}\quad & f = B A \theta & (\eta) \label{eq:flow_angles}\\
    & A^\top f + d + \delta = G p & (\lambda) \label{eq:energy_balance} \\
    & -\bar F \leq f \leq \bar F & (\mu^+, \mu^-) \label{eq:flow_limits} \\
    & 0 \leq p \leq \bar p & (\pi^+, \pi^-) \label{eq:dispatch_limits} \\
    & e_s^\top \theta = 0 & (\nu) \label{eq:dcopf_refbus}
\end{align}
\end{subequations}
where $p \in \mathbb{R}^k$ denotes offer segment dispatch values, $\theta \in \mathbb{R}^n$ the bus voltage angles, and $f \in \mathbb{R}^m$ the transmission line flows. The optimal value $V(\delta)$ represents total system operating cost. Constraint \eqref{eq:flow_angles} enforces DC power flow physics, constraint~\eqref{eq:energy_balance} enforces energy balance at the bus level, constraints \eqref{eq:flow_limits} and \eqref{eq:dispatch_limits} impose transmission and generation capacity limits, and constraint \eqref{eq:dcopf_refbus} fixes the reference bus angle. Since \eqref{eq:dcopf_new} is a linear program, the value function $V(\cdot)$ is convex and piecewise affine in $\delta$, and the corresponding LMP mapping is piecewise constant.

Dual variables corresponding to each constraint are shown in parentheses. In particular, $\lambda \in \mathbb{R}^n$ denotes the vector of LMPs, equal to the shadow prices of the nodal power balance constraints \eqref{eq:energy_balance} and representing the marginal cost of serving an additional unit of load at each bus\footnote{
At operating points where the DC-OPF is degenerate, the LMP vector is not unique. We interpret $\lambda(\delta)$ as a dual optimal solution minimizing the flexible consumer's procurement cost among all admissible LMP vectors, i.e.,
\begin{align*}
    \lambda(\delta) \in \argmin_{\lambda \in \Lambda(\delta)} \lambda^\top\big(d_{\mathrm{flex}} + \delta\big),
\end{align*}
where $\Lambda(\delta)$ denotes the set of dual optimal solutions at load $d+\delta$. This selection is without loss of generality and can be obtained as the limit of arbitrarily small perturbations that restore dual uniqueness.
}. Remaining dual variables capture shadow prices of DC flow physics, transmission, generation limits, and the reference bus.

\subsection{Upper Level: Spatially Flexible Load}
\label{subsec:flex}
\textbf{Load shifting constraints.}
Recall that we model a spatially flexible consumer through an additive load shift $\delta$ applied to a nominal flexible demand profile $d_{\mathrm{flex}}$. For a given flexibility level $\alpha \in [0,1]$, admissible load shifts are restricted to a polyhedral feasible set
\begin{align}
    \mathcal{F}(\alpha)
    \coloneqq
    \Big\{
    \delta \in \mathbb{R}^n :
    T(\alpha)\delta \leq q(\alpha),\;
    1^\top \delta = 0
    \Big\},
    \label{eq:polyhedral_shift_set}
\end{align}
where the balance constraint $1^\top \delta = 0$ ensures that load shifting preserves the flexible consumer's total energy demand. We assume throughout that $q(\alpha) \geq 0$, so that the no-shift action $\delta = 0$ is feasible. While our theoretical results apply to general polyhedral sets of the form \eqref{eq:polyhedral_shift_set}, we highlight as a concrete and interpretable special case the following ``box-with-balance'' model:
\begin{align}
    \mathcal{F}_\mathrm{box}(\alpha)
    =
    \Big\{
    \delta \in \mathbb{R}^n :
    |\delta| \leq \alpha d_{\mathrm{flex}},\;
    1^\top \delta = 0
    \Big\},
    \label{eq:box_with_balance}
\end{align}
which assumes that the load at a bus can increase or decrease by a fraction $\alpha$ of its nominal flexible load.

\textbf{Bilevel load shifting problem.} Let $\lambda(\delta)$ denote the LMP vector corresponding to the market clearing solution as a function of the load shift $\delta$. The flexible consumer procurement cost is then defined as
\begin{align}
    \Pi(\delta) \coloneqq \lambda(\delta)^\top\big(d_\mathrm{flex} + \delta\big),
\end{align}
and the corresponding flexible consumer cost minimizing problem is
\begin{subequations}\label{eq:load_shifting}
\begin{align}
    \min \quad & \lambda(\delta)^\top\big(d_\mathrm{flex} + \delta\big) \label{eq:load_shifting_obj} \\
    \mathrm{s.t.} \quad & \lambda(\delta) = \mathrm{Dual}(\text{DC-OPF}(\delta)) \label{eq:LMP} \\
    & \delta  \in \mathcal{F}(\alpha) \label{eq:feasible_shift}.
\end{align}
\end{subequations}
Here, constraint \eqref{eq:LMP} relates the LMPs $\lambda(\delta)$ to the shadow prices of the solution of the DC-OPF problem instantiated with a load shift of $\delta$ while constraint \eqref{eq:feasible_shift} defines the feasible region for load shifts $\delta$.

Importantly, Problem~\eqref{eq:load_shifting} is a \emph{bilevel program}, reflecting the leader–follower structure between the flexible consumer and the market operator: the consumer chooses a load shift~$\delta$ anticipating the endogenous price response induced by DC-OPF market clearing. To solve Problem~\eqref{eq:load_shifting} in practice, we replace the lower-level DC-OPF with its optimality conditions, yielding a single-level bilinear reformulation that preserves price endogeneity and can be solved using spatial branch-and-bound methods (see Appendix~\ref{app:reformulation} for details).

Crucially, the bilevel structure of \eqref{eq:load_shifting} exposes a fundamental tension between private and system objectives. Because the LMP mapping $\lambda(\delta)$ is piecewise constant, the flexible consumer's objective in \eqref{eq:load_shifting_obj} is inherently discontinuous and differs qualitatively from the system operator's objective \eqref{eq:dcopf_new}. As a result, a load shift that minimizes the flexible consumer's electricity procurement cost need not reduce total system operating cost. We formalize this divergence through the notion of \emph{misalignment}.

\begin{definition}[Misalignment]
We say that misalignment occurs when the flexible consumer's cost-minimizing load shift increases total system operating cost relative to the no-shift baseline, i.e.
\begin{align}
    & V(\delta^*) > V(0),
    & \delta^* \in \argmin_{\delta \in \mathcal{F}(\alpha)} \, \Pi(\delta).
    \label{eq:misalignment}
\end{align}
\end{definition}

Misalignment captures the possibility that decentralized, price-anticipatory load shifting---while individually optimal for flexible consumers---induces higher cost generation at the system level relative to the no-shift baseline (rather than relative to the system-optimal shift). In such cases, flexible demand exploits its influence on prices rather than alleviating system constraints, leading to a loss of system operating efficiency. The remainder of this work characterizes when this divergence can and cannot arise, linking misalignment to operating regime transitions in the underlying DC-OPF.

\section{Theoretical Misalignment Results}\label{sec:theory}
In this section, we characterize when decentralized, price-anticipatory load shifting reduces system operating cost and when it does not.
Our analysis exploits the piecewise-linear structure of the DC-OPF, under which dispatch outcomes and LMPs depend on the set of binding transmission and generation constraints.

We analyze these outcomes through the lens of \emph{operating regimes}: regions of the load-shift space over which the DC-OPF active set remains fixed.
Within a regime, dispatch and prices vary affinely with load, and the mapping from load shifts to system costs and consumer payments is well behaved.
Misalignment can therefore arise only when load shifting induces transitions between operating regimes, triggering discrete changes in marginal generators or congestion patterns.

To formalize this structure, we define active constraint sets and the corresponding regions of load shifts for which these sets remain unchanged.

\begin{definition}[Active-set regions]
\label{def:active_set_regions}
An \emph{active set} is a tuple $\mathcal A=(\mathcal L^+,\mathcal L^-,\mathcal G^+,\mathcal G^-)$
specifying binding transmission and generation limits. For a load shift $\delta$, we say an optimal DC-OPF solution $(p,\theta,f)$ at load $d+\delta$ \emph{realizes} $\mathcal A$ if
\begin{align*}
    & f_\ell=\bar F_\ell, \; \forall \ell\in\mathcal L^+,
    & f_\ell=-\bar F_\ell, \; \forall \ell\in\mathcal L^-, \\
    & p_g=\bar p_g, \; \forall g\in\mathcal G^+,
    & p_g=0, \; \forall g\in\mathcal G^-.
\end{align*}
The \emph{active-set region} associated with $\mathcal A$ is then defined as
\begin{align*}
    \mathcal R(\mathcal A) \coloneqq \big\{\delta\in\mathcal F(\alpha)\ \big|\ \exists\ \text{an optimal $(p,\theta,f)$ realizing } \mathcal A\big\}.
\end{align*}
\end{definition}
Distinct active sets correspond to different operating regimes---defined by congestion patterns and generators at bounds---and the associated regions $\mathcal R(\mathcal A)$ are polyhedral, as they are characterized by linear feasibility and optimality conditions under a fixed binding pattern. In fact, related active-set representations of DC-OPF have already been used to predict dispatch and price outcomes \cite{ng2018statistical,ji2015probabilistic} and to analyze strategic generation investment \cite{taheri2021strategic}, highlighting their usefulness as a structural lens for market analysis.

\subsection{Alignment Under Limited Flexibility}
As long as marginal generators and congestion patterns---i.e., the set of binding constraints in \eqref{eq:dcopf}---remain unchanged, LMPs are locally invariant. In this case, reductions in flexible consumer procurement costs translate one-for-one into reductions in total system operating cost. Within a single active-set region, the flexible consumer therefore cannot exert market power over prices, and decentralized load shifting is necessarily aligned with the system operator objective. Proposition~\ref{prop:sufficient_condition_for_alignment} formalizes this intuition by providing a sufficient condition for alignment in terms of the feasible flexibility level $\alpha$ and available generation and transmission headroom.

\begin{proposition}\label{prop:sufficient_condition_for_alignment}
Suppose the flexible consumer’s feasible load-shifting set $\mathcal F(\alpha)$ is sufficiently small that, for all admissible load shifts $\delta \in \mathcal F(\alpha)$, the DC-OPF clears without changing
\begin{enumerate}
    \item the set of marginal generators, and
    \item the set of binding transmission constraints,
\end{enumerate}
relative to the baseline dispatch at $\delta = 0$. Equivalently, assume that
\begin{align*}
    \mathcal F(\alpha) \subset \mathcal R(\mathcal A_0),
\end{align*}
where $\mathcal A_0$ denotes the active set associated with the baseline operating point. Then minimizing the flexible consumer’s electricity procurement cost is equivalent to minimizing total system operating cost:
\begin{align*}
    \argmin_{\delta \in \mathcal F(\alpha)} V(\delta)
    \;=\;
    \argmin_{\delta \in \mathcal F(\alpha)} \Pi(\delta),
\end{align*}
and misalignment \eqref{eq:misalignment} cannot occur.
\end{proposition}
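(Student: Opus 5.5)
The plan is to show that, on $\mathcal R(\mathcal A_0)$, a single LMP vector serves as a valid dual for every admissible shift. Once prices are fixed, both $V$ and $\Pi$ become affine functions of $\delta$ that differ by a constant.

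\textbf{Step 1 (fixed dual certificate).} Fix $\delta\in\mathcal F(\alpha)\subset\mathcal R(\mathcal A_0)$, and let $(p,\theta,f)$ be an optimal solution realizing $\mathcal A_0$. Take any dual optimal vector $(\lambda_0,\eta_0,\mu_0^\pm,\pi_0^\pm,\nu_0)$ at the baseline $\delta=0$. Since $\delta$ enters the DC-OPF only through the right-hand side of \eqref{eq:energy_balance}, the dual feasible set does not depend on $\delta$, so this baseline dual remains dual feasible. Complementary slackness requires nonzero multipliers only on constraints that are binding. At the baseline, after perturbing to a nondegenerate point if necessary, the nonzero multipliers are supported on constraints in $\mathcal A_0$, and $(p,\theta,f)$ keeps exactly those constraints tight. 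The pair is therefore primal--dual optimal, and $\lambda_0\in\Lambda(\delta)$.

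\textbf{Step 2 (affine value).} By strong duality, $V(\delta)$ equals the dual objective at this fixed dual. That dual objective is affine in $\delta$ with coefficient $\lambda_0$ on $d+\delta$, so $V(\delta)=V(0)+\lambda_0^\top\delta$ for all $\delta\in\mathcal F(\alpha)$. Equivalently, $V$ is a convex piecewise-affine function whose subgradient $\lambda_0$ is common across the whole region.

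\textbf{Step 3 (prices are constant).} The footnote selects $\lambda(\delta)\in\argmin_{\lambda\in\Lambda(\delta)}\lambda^\top(d_{\mathrm{flex}}+\delta)$. This is the step I expect to be the main obstacle. Hypotheses (1)--(2) say the marginal generators and binding lines are unchanged. Under that reading, the LMPs are the unique solution of the reduced KKT system $\lambda=G$-restricted marginal costs plus congestion terms, and that system is determined by $\mathcal A_0$ alone. So $\Lambda(\delta)=\{\lambda_0\}$ throughout, or at least the selected $\lambda(\delta)$ is the same for all $\delta$. If degeneracy prevents uniqueness, I would interpret the hypothesis as requiring a common dual face. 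I would then argue through the perturbation limit cited in the footnote that the selection is constant.

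\textbf{Step 4 (conclusion).} With $\lambda(\delta)\equiv\lambda_0$, we have $\Pi(\delta)=\lambda_0^\top d_{\mathrm{flex}}+\lambda_0^\top\delta$. By Step 2, this equals $\Pi(\delta)=V(\delta)+\big(\lambda_0^\top d_{\mathrm{flex}}-V(0)\big)$, so $V$ and $\Pi$ differ by a constant on $\mathcal F(\alpha)$ and their argmin sets coincide. Finally, $0\in\mathcal F(\alpha)$ because $q(\alpha)\ge0$. Any minimizer $\delta^*$ of $\Pi$ therefore satisfies $V(\delta^*)\le V(0)$, which rules out \eqref{eq:misalignment}.
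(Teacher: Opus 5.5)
Your overall route matches the paper's. Constant LMPs on $\mathcal F(\alpha)$ make $V$ and $\Pi$ affine with the same slope, so they differ by a constant. Your Step 2 is tighter than the paper's version. The paper invokes the envelope theorem to write $\nabla_\delta V=\lambda$ on $\mathcal F(\alpha)$ and then integrates $\nabla(V-\Pi)=0$. That tacitly assumes $V$ is differentiable on a set lying in the hyperplane $\mathbf 1^\top\delta=0$, where $V$ can have kinks in transverse directions. Your fixed primal--dual certificate gives $V(\delta)=V(0)+\lambda_0^\top\delta$ on $\mathcal F(\alpha)$ directly from strong duality, with no differentiability needed. You also make the final step $0\in\mathcal F(\alpha)\Rightarrow V(\delta^*)\le V(0)$ explicit. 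The ``perturb to a nondegenerate point'' remark in Step 1 is unnecessary: any baseline dual optimum is complementary to the baseline primal optimum, so its support already lies in $\mathcal A_0$.

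Step 3, however, does not hold as written.

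\textbf{The uniqueness claim is false.} The reduced KKT system need not determine $\lambda$ uniquely. Take an uncongested network whose total load exactly exhausts the cheapest unit's capacity. Then no generator is marginal and the active set is the same for every balanced $\delta$. Yet every uniform price $\ell\mathbf 1$, with $\ell$ between that unit's cost and the next unit's cost, is an LMP.

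\textbf{The fallback is not an argument.} Assuming a common dual face and appealing to the footnote's perturbation limit does not establish anything.

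\textbf{How to close it with what you already have.}
(i) Run Step 1 in reverse. If the optimum at $\delta$ has \emph{exactly} the tight set $\mathcal A_0$, as hypotheses (1)--(2) state, then every dual optimum at $\delta$ is supported on $\mathcal A_0$. It is therefore complementary to the baseline primal and dual optimal at $0$, so $\Lambda(\delta)=\Lambda(0)=:\Lambda_0$.
(ii) Your Step 2 computation applies to every $\lambda\in\Lambda_0$, giving $\lambda^\top\delta=V(\delta)-V(0)$ for all of them. Hence
\[
\Pi(\delta)=\min_{\lambda\in\Lambda_0}\lambda^\top(d_{\mathrm{flex}}+\delta)=V(\delta)-V(0)+\min_{\lambda\in\Lambda_0}\lambda^\top d_{\mathrm{flex}},
\]
a constant offset, with no uniqueness or perturbation needed.

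The exact reading in (i) matters. Under the paper's literal ``at least $\mathcal A_0$ tight'' definition of $\mathcal R(\mathcal A_0)$, $\Lambda(\delta)$ can be strictly larger at boundary points of the region. The footnote's min-selection can then lower $\Pi$ there, so $\Pi-V$ need not be constant. The paper's proof simply asserts that all dual optima stay constant on $\mathcal R(\mathcal A_0)$. With (i)--(ii), your version justifies that step rather than assuming it.
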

Proposition~\ref{prop:sufficient_condition_for_alignment} follows directly from the piecewise-affine structure of the DC-OPF (see Appendix~\ref{sec:sufficient_condition_for_alignment_proof} for the proof), but it plays an important conceptual role. It shows that misalignment cannot arise unless flexible demand is large and/or flexible enough to alter the system's operating regime---specifically, to change the set of binding transmission or generation constraints. When the feasible set of load shifts remains confined to $\mathcal{R}(\mathcal{A}_0)$, the mapping from load to dispatch and prices is affine, and minimizing consumer procurement cost is equivalent to minimizing system operating cost.

\subsection{Misalignment Requires Market Power}

When the feasible set of load shifts is sufficiently large, however, a flexible consumer can exert market power by strategically positioning demand so as to influence which generators or transmission constraints are marginal in the market clearing solution. In particular, a consumer might reallocate load to avoid triggering a high cost generator at one location, thereby lowering the prices it faces locally. However, this same reallocation can force the system operator to rely more heavily on even higher cost resources that are already marginal elsewhere in the network.

It is this form of strategic gaming---in which load shifting suppresses one marginal resource while exacerbating reliance on others---that creates the potential for misalignment between private procurement costs and total system operating cost. Proposition~\ref{prop:sufficient_condition_for_alignment} showed that such behavior cannot arise unless load shifting alters the set of binding constraints. We now make this mechanism precise by characterizing misalignment in terms of transitions between operating regimes of the DC-OPF.

\begin{proposition}[Misalignment occurs at intersections of active-set regions]
\label{prop:necessary_condition_for_misalignment}
Suppose there exists an optimal shift $\delta' \in \argmin_{\delta\in\mathcal F(\alpha)} \Pi(\delta)$ such that $V(\delta')>V(0)$. Then, there exists an optimal $\delta^* \in \argmin_{\delta\in\mathcal F(\alpha)} \Pi(\delta)$ with $V(\delta^*)>V(0)$ and two distinct active sets $\mathcal A_1\neq \mathcal A_2$ such that
\begin{align*}
\delta^* \in \mathcal R(\mathcal A_1)\cap \mathcal R(\mathcal A_2).
\end{align*}
Equivalently, misalignment can only occur if the optimal solution to \eqref{eq:load_shifting} positions the system at a boundary between active sets where the identity of marginal generators and/or congested transmission lines changes.
\end{proposition}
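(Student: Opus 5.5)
I will argue by contradiction and local perturbation. Start from the misaligned minimizer $\delta'$ with $V(\delta')>V(0)$. Suppose that $\delta'$ lies in the active-set region of exactly one active set, call it $\mathcal A_1$, and in no other region. Since active-set regions are polyhedral and only finitely many active sets exist, the regions form a finite polyhedral cover of $\mathcal F(\alpha)$. So if $\delta'$ belongs only to $\mathcal R(\mathcal A_1)$, it has a relative neighborhood in $\mathcal F(\alpha)$ contained in $\mathcal R(\mathcal A_1)$. More precisely, the segment from $\delta'$ toward any feasible direction remains in $\mathcal R(\mathcal A_1)$ for small step sizes. On that neighborhood the optimal dispatch, flows and LMPs are determined by the fixed binding pattern: $\lambda$ is constant and $V$ is affine with gradient $\lambda$. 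This is exactly the structure used in the proof of Proposition~\ref{prop:sufficient_condition_for_alignment}.

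The key step is to connect $\delta'$ to the baseline $\delta=0$. Consider the segment $\delta(t)=t\delta'$, $t\in[0,1]$, which is feasible because $\mathcal F(\alpha)$ is convex and contains $0$. Partition $[0,1]$ into finitely many subintervals on which a single active set is realized, using polyhedrality and finiteness. On each piece $V$ is affine with slope $\lambda^\top\delta'$, and at breakpoints the segment lies in two regions at once.

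\textbf{Case 1: the segment never leaves $\mathcal R(\mathcal A_1)$.} Then $\lambda(t\delta')\equiv\lambda$ is constant. Because $1^\top\delta'=0$, we get
\[
\Pi(\delta')-\Pi(0)=\lambda^\top\delta'=V(\delta')-V(0)>0.
\]
This contradicts the optimality of $\delta'$ for $\Pi$, since $0$ is feasible and $\Pi(0)<\Pi(\delta')$.

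\textbf{Case 2: the segment crosses a boundary.} I would then like to produce a $\Pi$-optimal point lying in two regions. The natural move is to slide $\delta'$ within $\mathcal R(\mathcal A_1)$ along a direction that keeps $\Pi$ unchanged or lower. Inside a single region $\Pi(\delta)=\lambda^\top(d_{\mathrm{flex}}+\delta)$ is linear with constant $\lambda$, and $V$ differs from $\Pi$ by a constant there. Hence the set of $\Pi$-minimizers within $\mathcal R(\mathcal A_1)$ is a face of that polyhedron, on which $\lambda^\top\delta$ is minimized and $V$ is constant. Moving along this face until a boundary of $\mathcal R(\mathcal A_1)$ is hit therefore keeps both $\Pi$ and $V$ fixed. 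This yields a $\delta^*$ with $V(\delta^*)=V(\delta')>V(0)$ that lies in $\mathcal R(\mathcal A_1)\cap\mathcal R(\mathcal A_2)$, provided the face's relative boundary is not contained in the boundary of $\mathcal F(\alpha)$ alone.

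The main obstacle is this last situation: the optimal face of $\mathcal R(\mathcal A_1)$ is bounded only by constraints of $\mathcal F(\alpha)$, so that $\mathcal R(\mathcal A_1)$ touches no other region along it. To handle it, I would use that $\lambda$ is constant on $\mathcal R(\mathcal A_1)$. If $0\in\mathcal R(\mathcal A_1)$, Case 1 already gives a contradiction. If not, the segment $[0,\delta']$ must exit $\mathcal R(\mathcal A_1)$ at a boundary point $t_0\delta'$ shared with another region. At that point linearity gives
\[
\Pi(t_0\delta')-\Pi(\delta')=-(1-t_0)\lambda^\top\delta' .
\]
If $\lambda^\top\delta'<0$, this quantity is positive but affine along the segment. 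I would then compare it with the first-order decrease in $V$. Concretely, I would show that $V$ along the segment is convex with slope $\lambda^\top\delta'$ on the final piece, and combine this with $V(\delta')>V(0)$ to force $\lambda^\top\delta'>0$. Then the shared boundary point $t_0\delta'$ has strictly smaller $\Pi$, contradicting the optimality of $\delta'$. Hence $\delta'$ cannot be optimal while lying only in the interior of one region unless the face construction above applies, and that construction delivers the required $\delta^*$ lying in two active-set regions.

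Convexity of $V$ gives $V(\delta')-V(0)\le\nabla V(\delta')^\top\delta'=\lambda^\top\delta'$, so $\lambda^\top\delta'>0$. This inequality is the crux, and I expect it to carry the argument.
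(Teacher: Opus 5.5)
Your argument is correct and is essentially the paper's. The decisive step is the paper's Lemma (convexity of $V$ gives $\lambda^\top\delta'\ge V(\delta')-V(0)>0$), combined with the fact that a $\Pi$-minimizer lying in a single region must locally minimize the linear function $\lambda^\top\delta$ over the convex set $\mathcal F(\alpha)\ni 0$; the paper gets this through the KKT identity $\lambda^\top\delta^*=-\gamma^\top q(\alpha)\le 0$, and you get it as a strict descent step from $\delta'$ toward $0$. That crux rules out the single-region case outright, so $\delta'$ already lies in two regions and you can take $\delta^*=\delta'$, as the paper does; the face-sliding construction in Case~2 and the closing ``unless the face construction applies'' clause are unnecessary and should be dropped.
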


Proposition~\ref{prop:necessary_condition_for_misalignment} establishes a central economic conclusion of this paper. Misalignment cannot arise from a load shift that lies in the interior of a DC-OPF operating regime: although such shifts may increase total system operating cost, they are never optimal for the flexible consumer. Rather, misalignment can occur only when the consumer’s cost-minimizing load shift places the system at the intersection of multiple active-set regions. At such regime boundaries, arbitrarily small changes in net load can switch which generators are marginal or which transmission constraints bind, inducing instantaneous and non-smooth changes in LMPs. These price discontinuities decouple the flexible consumer’s procurement incentives from total system operating cost, allowing a load shift that lowers consumer payments while increasing total generation cost.

\section{Three-Zone Illustrative Example}
\subsection{System Description}
To illustrate the theoretical results of Section~\ref{sec:theory}, we consider a stylized three-zone system designed to make operating regime transitions and price endogeneity explicit. The network consists of three zones (A, B, and C) connected by (uniform susceptance) capacity-limited transmission links, giving rise to multiple operating regimes. Generation costs differ across zones, with high cost resources in Zone~A, intermediate-cost resources in Zone~B, and low cost generation in Zone~C. Load is distributed asymmetrically across the network, with flexible demand located in all zones but concentrated in Zone~C; full parameter values are provided in Fig.~\ref{fig:three_bus_system}. This configuration creates opposing consumer incentives to either shift load toward lower cost generation or to strategically alter marginal generators or congestion patterns, allowing decentralized load shifting to diverge from system cost minimization.

\begin{figure*}[ht]
     \centering
     \begin{subfigure}[b]{0.32\textwidth}
         \centering
         \scalebox{0.5}{\begin{tikzpicture}[
    bus/.style={circle, draw=black, thick, minimum size=1.2cm, fill=blue!10},
    gen/.style={rectangle, draw=black, thick, minimum width=1.5cm, minimum height=0.7cm, fill=green!10},
    load/.style={rectangle, draw=black, thick, minimum width=1.2cm, minimum height=0.7cm, fill=orange!20},
    line/.style={thick, -{Stealth}},
    label/.style={font=\small}
]

\node[bus] (bus0) at (0, 0) {\textbf{Zone A}};
\node[bus] (bus1) at (6, 0) {\textbf{Zone B}};
\node[bus] (bus2) at (3, 4) {\textbf{Zone C}};

\node[gen, left=0.3cm of bus0, label] (genA1) {Gen A1};
\node[label, above=0.05cm of genA1] {\footnotesize 80\$/MWh, 2000 MW};

\node[gen, right=0.3cm of bus1, label] (genB1) {Gen B1};
\node[label, above=0.05cm of genB1] {\footnotesize 60\$/MWh, 2000 MW};

\node[gen, above left=0.8cm of bus2, label] (genC1) {Gen C1};
\node[label, above=0.05cm of genC1] {\footnotesize 0\$/MWh, 750 MW};
\node[gen, above right=0.5cm and 0.3cm of bus2, label] (genC2) {Gen C2};
\node[label, above=0.05cm of genC2] {\footnotesize 40\$/MWh, 600 MW};

\node[load, below=0.8cm of bus0, label] (load0) {Load A};
\node[label, below=0.05cm of load0] {\footnotesize 800 MW base + 200 MW flex};

\node[load, below=0.8cm of bus1, label] (load1) {Load B};
\node[label, below=0.05cm of load1] {\footnotesize 800 MW base + 200 MW flex};

\node[load, left=0.8cm of bus2, label] (load2) {Load C};
\node[label, below left=0.05cm and -2cm of load2] {\footnotesize 100 MW base + 400 MW flex};

\draw[line] (genA1) -- (bus0);
\draw[line] (genB1) -- (bus1);
\draw[line] (genC1) -- (bus2);
\draw[line] (genC2) -- (bus2);

\draw[line] (bus0) -- (load0);
\draw[line] (bus1) -- (load1);
\draw[line] (bus2) -- (load2);

\draw[very thick, double distance=2pt] (bus0) -- (bus1) 
    node[midway, below=0.1cm, label] {200 MW};
    
\draw[very thick, double distance=2pt] (bus0) -- (bus2) 
    node[midway, left=0.1cm, label] {200 MW};
    
\draw[very thick, double distance=2pt] (bus1) -- (bus2) 
    node[midway, right=0.1cm, label] {200 MW};

\end{tikzpicture}}
         \caption{System topology and parameters.}
         \label{fig:three_bus_system}
     \end{subfigure}
     \hfill
     \begin{subfigure}[b]{0.3\textwidth}
         \centering
         \includegraphics[width=\textwidth]{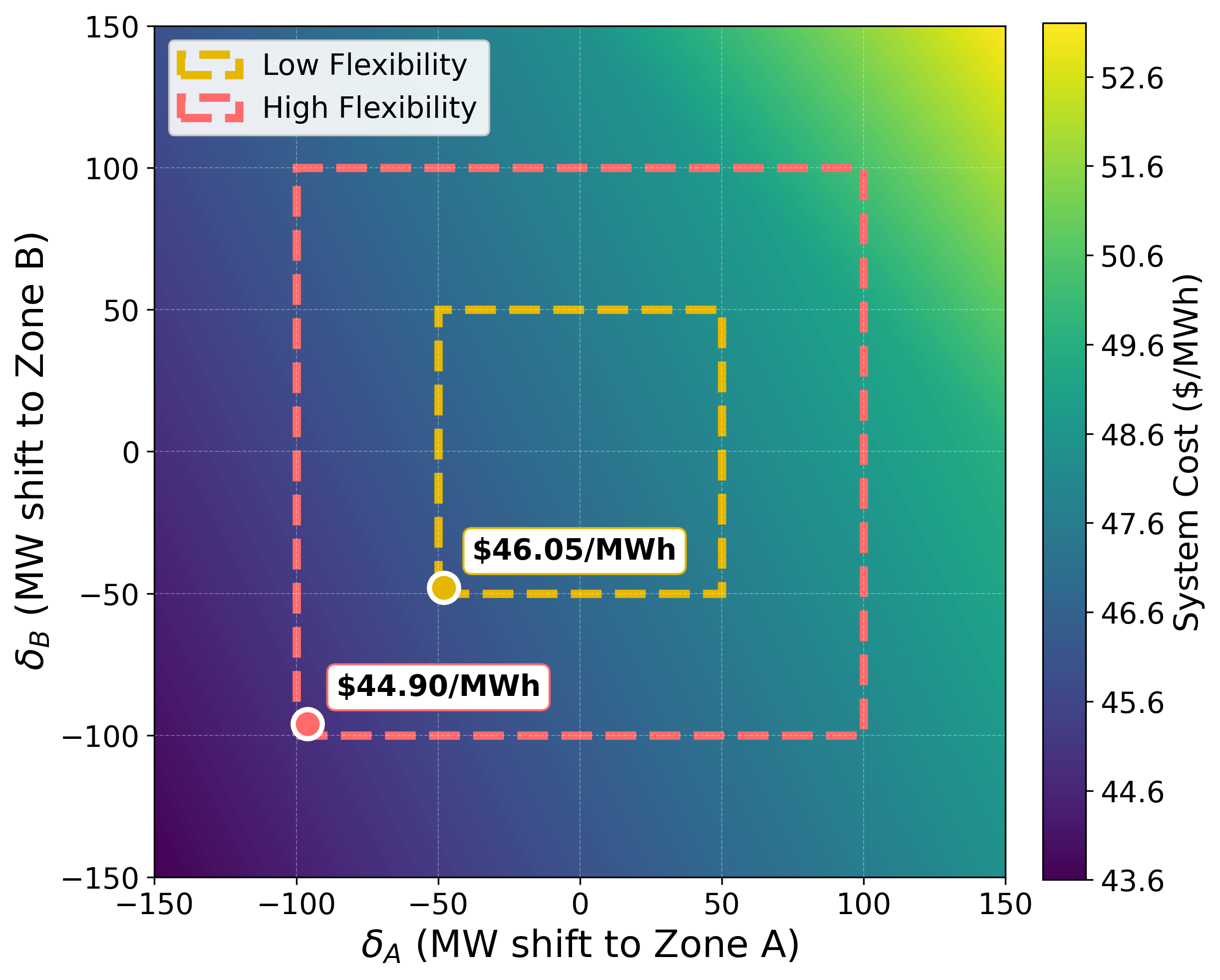}
         \caption{System operating cost landscape.}
         \label{fig:system_cost_landscape}
     \end{subfigure}
     \hfill
     \begin{subfigure}[b]{0.3\textwidth}
         \centering
         \includegraphics[width=\textwidth]{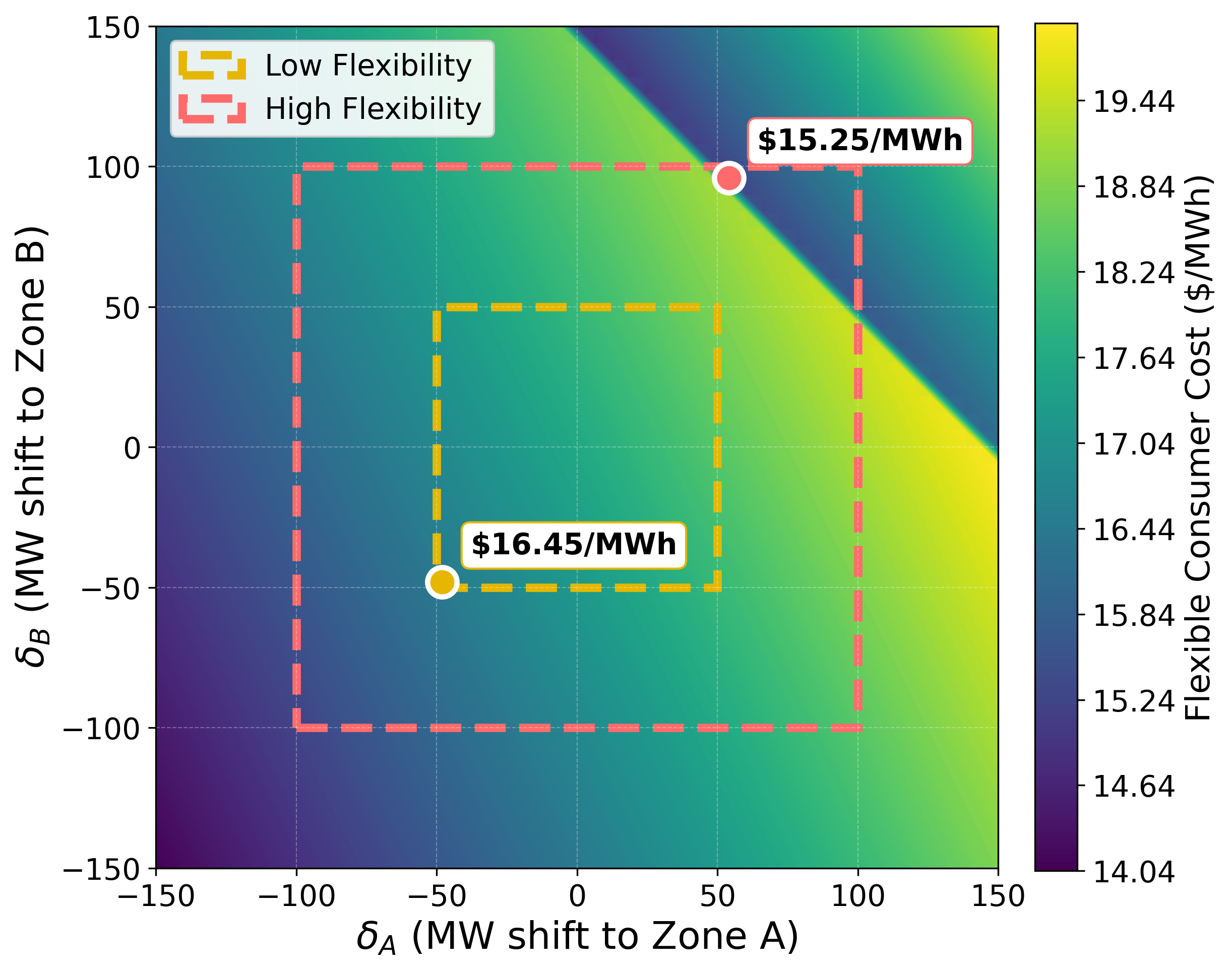}
         \caption{Flexible consumer cost landscape.}
         \label{fig:consumer_cost_landscape}
     \end{subfigure}
     \label{fig:three_bus}
     \caption{Three-zone illustrative example. Misalignment arises from price discontinuities at active-set region boundaries; compare the discontinuous flexible consumer objective in Fig.~\ref{fig:consumer_cost_landscape} with the continuous, convex system operating cost in Fig.~\ref{fig:system_cost_landscape}.}
\end{figure*}

\subsection{Impacts of Load Shifting}

We evaluate the system operating cost $V(\delta)$ and the flexible consumer procurement cost $\Pi(\delta)$ over the range $(\delta_A,\delta_B)\in[-150,150]^2$, with $\delta_C=-(\delta_A+\delta_B)$ enforcing the box-with-balance shifting constraints (see Eq.~\eqref{eq:box_with_balance}). Figures~\ref{fig:system_cost_landscape} and~\ref{fig:consumer_cost_landscape} visualize the resulting cost landscapes, while Table~\ref{tab:cost_comp} compares the optimal outcomes under centralized and decentralized load-shifting objectives (normalized to units of \$/MWh) for varying levels of flexibility~$\alpha$.

\begin{table}[h]
\centering
\caption{Three-Zone Centralized vs. Decentralized Shifting}
\label{tab:cost_comp}
\begin{tabular}{lllccc}
\toprule
$\boldsymbol{\alpha}$ & \textbf{Objective} & \textbf{Load Shift} $\boldsymbol{\delta}$ & $\boldsymbol{V}$ & $\boldsymbol{\Pi}$ & $\boldsymbol{\Delta V}$ \\
\midrule
Baseline & --- & (0, 0, 0) & 47.20 & 17.60 & 0.00 \\
\midrule
Low Flex. & Centralized & (-48, -48, 96) & 46.05 & 16.45 & 1.15 $\downarrow$ \\
 & Decentralized & (-48, -48, 96) & 46.05 & 16.45 & 1.15 $\downarrow$ \\
High Flex. & Centralized & (-96, -96, 192) & 44.90 & 15.32 & 2.30 $\downarrow$ \\
 & Decentralized & (54, 96, -150) & 48.83 & 15.25 & 1.63 $\uparrow$ \\
\bottomrule
\end{tabular}
\begin{tablenotes}
\footnotesize
\item \textit{Notes:} Cost values are reported in \$/MWh.
$V$ denotes system operating cost, $\Pi$ denotes the flexible consumer cost, and $\Delta V$ reports the change in system operating cost relative to the no-shift baseline ($\alpha = 0$).
\end{tablenotes}
\end{table}

\textbf{Alignment under limited flexibility (Proposition~\ref{prop:sufficient_condition_for_alignment}).}
Under low flexibility ($\alpha = 0.25$), the load shift selected by the flexible consumer coincides with the load shift that minimizes total system operating cost (Table~\ref{tab:cost_comp}), i.e.,
\begin{align*}
    \argmin_{\delta\in\mathcal{F}(0.25)} V(\delta)
    \;=\;
    \argmin_{\delta\in\mathcal{F}(0.25)} \Pi(\delta).
\end{align*}
This outcome arises because the feasible load-shifting set is confined to a subset of the baseline active-set region $\mathcal{R}(\mathcal{A}_0)$. Within this region, the set of binding transmission and generation constraints remains unchanged, so dispatch varies affinely with load and LMPs are constant over $\mathcal{F}(0.25)$. Consequently, reductions in the flexible consumer’s procurement cost translate one-for-one into reductions in system operating cost, and decentralized load shifting is necessarily aligned with the system objective.

\textbf{Misalignment requires market power (Proposition~\ref{prop:necessary_condition_for_misalignment}).}
Under higher flexibility ($\alpha = 0.5$), the feasible set expands beyond the baseline active-set region, permitting load shifts that cross congestion and marginal-generator boundaries. In this regime, the sufficient condition for alignment in Proposition~\ref{prop:sufficient_condition_for_alignment} no longer holds, and misalignment emerges as the consumer-optimal load shift diverges from the system-optimal solution.

In particular, the consumer-optimal outcome involves a strategic reallocation of load toward higher-LMP zones (A and B) that displaces generator C2 from the marginal set. This operating regime transition lowers the LMP at Zone~C from \$40/MWh to \$0/MWh, reducing the flexible consumer’s procurement cost by \$2.35/MWh (13.3\%). However, the same shift forces the system to rely more heavily on higher cost generation elsewhere, increasing system operating cost by \$1.63/MWh (3.4\%). This three-zone example therefore illustrates that misalignment arises not from gradual inefficiencies in redispatch, but from discrete operating regime transitions induced by strategic load positioning, consistent with the mechanism characterized in Proposition~\ref{prop:necessary_condition_for_misalignment}.

\section{Evaluation on the RTS-GMLC Test System}\label{sec:computational}
\subsection{Experimental Setup}
To assess how the theoretical mechanisms identified in Section~\ref{sec:theory} manifest in a realistic, large-scale setting, we evaluate strategic spatial load shifting on the synthetic 73-bus IEEE RTS-GMLC test system \cite{rts}. This system captures realistic network congestion and generation heterogeneity, allowing us to study how price-anticipatory load shifting affects flexible consumer costs and total system operating cost. We additionally track changes in inflexible consumer costs and generator profits to characterize surplus redistribution across market participants.

Our analysis considers hourly market clearing and spatial load shifting for all 8{,}784 hours in the RTS-GMLC dataset. We solve a single-level reformulation of Problem~\eqref{eq:load_shifting} (see Problem~\eqref{eq:reformulation} in Appendix~\ref{app:reformulation}) independently for each hour using Gurobi~12 \cite{gurobi}. Following \cite{GorkaEtal2024}, we model four large spatially flexible loads located at buses 103, 107, 204, and 322, each with nominal demand of 250~MW, together accounting for approximately 20\% of total energy consumption over the study period. To ensure feasibility in all hours, generator capacities are uniformly scaled by a factor of 1.25.

We compare two load-shifting formulations: (i) a decentralized, price-anticipatory flexible consumer problem \eqref{eq:load_shifting}, and (ii) a centralized benchmark in which the flexible consumer objective is replaced by the system operator objective \eqref{eq:dcopf_new}, corresponding to a central planner solution. We consider two flexibility levels, $\alpha \in \{0.25, 0.5\}$. For each setting, results are reported relative to a no-flexibility baseline ($\delta = 0$) in terms of changes in system operating cost, flexible and inflexible consumer procurement costs, and generator profits.

\subsection{Cost and Profit Outcomes}
Table~\ref{tab:cost_savings} reports annualized changes in system operating costs, consumer procurement costs, and generator profits across all experimental scenarios; for a visual reference, see Fig.~\ref{fig:cost_savings} in Appendix~\ref{app:figures}, which shows the corresponding distributions of hourly outcomes as a boxplot. Together, these results reveal how decentralized, price-anticipatory load shifting redistributes surplus from generators to consumers and when such redistribution aligns (or fails to align) with system objectives. We highlight key findings below.

\begin{table*}[htbp]
\centering
\caption{Cost and Profit Changes under Load Shifting}
\label{tab:cost_savings}
\setlength{\tabcolsep}{6pt}
\begin{tabular}{llcccccccccc}
\toprule
& & 
\multicolumn{2}{c}{\textbf{Flexible Consumer}} & 
\multicolumn{2}{c}{\textbf{Inflexible Consumer}} & 
\multicolumn{2}{c}{\textbf{Generator}} & 
\multicolumn{3}{c}{\textbf{System Operator}} \\

& & \multicolumn{2}{c}{\footnotesize Base Cost: \$242.9M}
& \multicolumn{2}{c}{\footnotesize Base Cost: \$918.8M}
& \multicolumn{2}{c}{\footnotesize Base Profit: \$510.4M}
& \multicolumn{3}{c}{\footnotesize Base Cost: \$548.6M} \\

\cmidrule(lr){3-4} \cmidrule(lr){5-6} \cmidrule(lr){7-8} \cmidrule(lr){9-11}

\textbf{Objective} & $\boldsymbol{\alpha}$ 
& \textbf{\$M} & \textbf{\%} 
& \textbf{\$M} & \textbf{\%} 
& \textbf{\$M} & \textbf{\%} 
& \textbf{\$M} & \textbf{\%} & \textbf{Misalign \%} \\

\midrule
\multirow{2}{*}{Flexible Consumer}
& 0.25 & 28.2 $\downarrow$ & 11.6\% $\downarrow$ & 17.7 $\downarrow$ & 1.9\% $\downarrow$ & 14.9 $\downarrow$ & 2.9\% $\downarrow$ & 3.1 $\downarrow$ & 0.6\% $\downarrow$ & 4.6\% \\
& 0.50 & \textbf{30.6 $\downarrow$} & \textbf{12.6\%} $\downarrow$ & \textbf{23.0} $\downarrow$ & \textbf{2.5\%} $\downarrow$ & 20.1 $\downarrow$ & 3.9\% $\downarrow$ & 4.2 $\downarrow$ & 0.8\% $\downarrow$ & 10.7\% \\

\midrule
\multirow{2}{*}{System Operating Cost}
& 0.25 & 24.8 $\downarrow$ & 10.2\% $\downarrow$ & 4.9 $\downarrow$ & 0.5\% $\downarrow$ & 0.9 $\uparrow$ & 0.2\% $\uparrow$ & 3.4 $\downarrow$ & 0.6\% $\downarrow$ & -- \\
& 0.50 & 23.1 $\downarrow$ & 9.5\% $\downarrow$ & 3.4 $\downarrow$ & 0.4\% $\downarrow$ & \textbf{5.6} $\uparrow$ & \textbf{1.1\%} $\uparrow$ & \textbf{4.7} $\downarrow$ & \textbf{0.9\%} $\downarrow$ & -- \\

\bottomrule
\end{tabular}

\begin{tablenotes}
\footnotesize
\item Notes: Results aggregated over 8,784 hours (366 days $\times$ 24 hours). 
Monetary values (denoted by \$M) are in millions of USD per year. Percent change from baseline is shown in columns labeled \%. 
Misalign \% shows the percentage of hours where flexible consumer optimization leads to increased system operating costs. 
\textbf{Bold values} indicate the most favorable outcome for each stakeholder across all settings.
\end{tablenotes}
\end{table*}

\textbf{Flexible consumer benefits from strategic shifting.}
Table~\ref{tab:cost_savings} shows that the flexible consumer achieves substantial procurement cost reductions under decentralized load shifting---up to \$30.6~million (12.6\% of baseline costs) over the one-year period for $\alpha=0.5$. Relative to system-optimal shifting, price-anticipatory behavior yields an additional \$7.5~million in savings (3.1\%), indicating clear incentives for flexible consumers to act strategically rather than to coordinate with system-wide cost minimization.

\textbf{Inflexible consumers receive modest spillover benefits.} Although strategic load shifting can, in principle, raise prices for inflexible consumers by altering marginal generators, this effect is dominated in our experiments by offsetting benefits. For $\alpha=0.5$, inflexible consumers experience average cost savings of up to 2.5\% of baseline costs under decentralized shifting, compared to 0.4\% under system-optimal shifting, suggesting that shared incentives to displace high cost generation often outweigh adverse redispatch effects.

\textbf{Generator profits fall under decentralized shifting but rise under system-optimal shifting.} When the flexible consumer minimizes procurement costs, generator profits decline modestly (3.9\% for $\alpha=0.5$), reflecting downward pressure on LMPs. In contrast, system-optimal shifting increases generator profits by 1.1\%. Ultimately, we find in our experiments that system-optimal shifting benefits all stakeholder groups to some degree, while decentralized shifting redistributes surplus from generators to consumers.

\textbf{Aggregate system cost impacts are small and similar across objectives.} Despite these distributional effects, aggregate system operating cost savings remain modest under decentralized load shifting (approximately 0.8\% of baseline costs for $\alpha=0.5$), with system-optimal shifting improving savings by only an additional 0.1\%. This indicates that the primary effects of strategic load shifting are redistributive rather than strongly inefficient at the system level.

\textbf{Flexibility enables market power and misalignment.} Increasing flexibility expands the feasible load-shifting set, enabling the flexible consumer to influence marginal generators and congestion patterns. Consistent with Proposition~\ref{prop:sufficient_condition_for_alignment}, raising $\alpha$ from $0.25$ to $0.5$ increases the incidence of misalignment from 4.6\% to 10.7\% of all hours, highlighting that misalignment arises through operating regime transitions rather than small adjustments within a fixed regime.

\subsection{Strategic Regime Shifting}
Fig.~\ref{fig:merit_order} summarizes how price-anticipatory spatial load shifting alters generator marginal status and dispatch outcomes along the active portion of the merit order. The left panels report the net change in the frequency with which each generator is marginal across all hours, where positive (red) values indicate that a generator becomes marginal more often and negative (blue) values indicate that it becomes marginal less often. The right panels report the corresponding net change in dispatched energy. Cases in which decentralized load shifting increases total system operating cost (misalignment) are shown in Fig.~\ref{fig:misaligned_merit_order}, while cases in which load shifting reduces system operating cost (alignment) are shown in Fig.~\ref{fig:aligned_merit_order}.

\begin{figure}[htbp]
    \centering

    \begin{subfigure}[t]{\columnwidth}
        \centering
        \includegraphics[width=\textwidth]{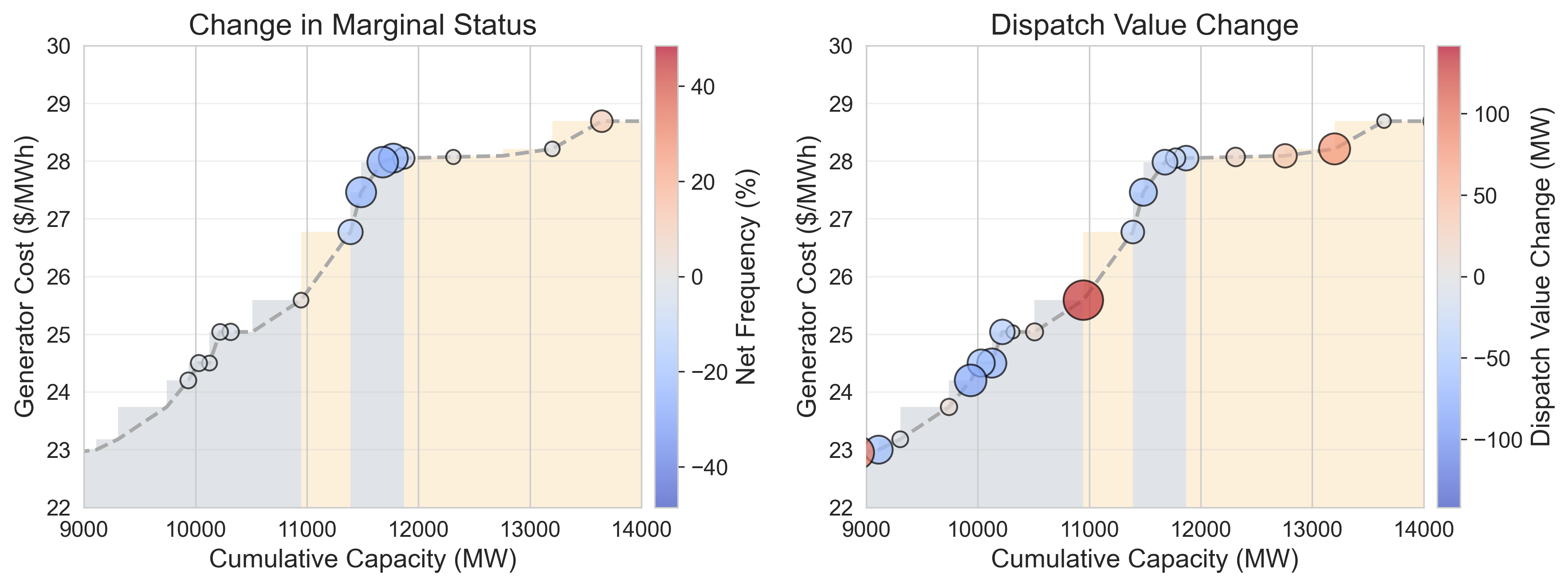}
        \caption{Misaligned instances}
        \label{fig:misaligned_merit_order}
    \end{subfigure}

    \vspace{0.8em}

    \begin{subfigure}[t]{\columnwidth}
        \centering
        \includegraphics[width=\textwidth]{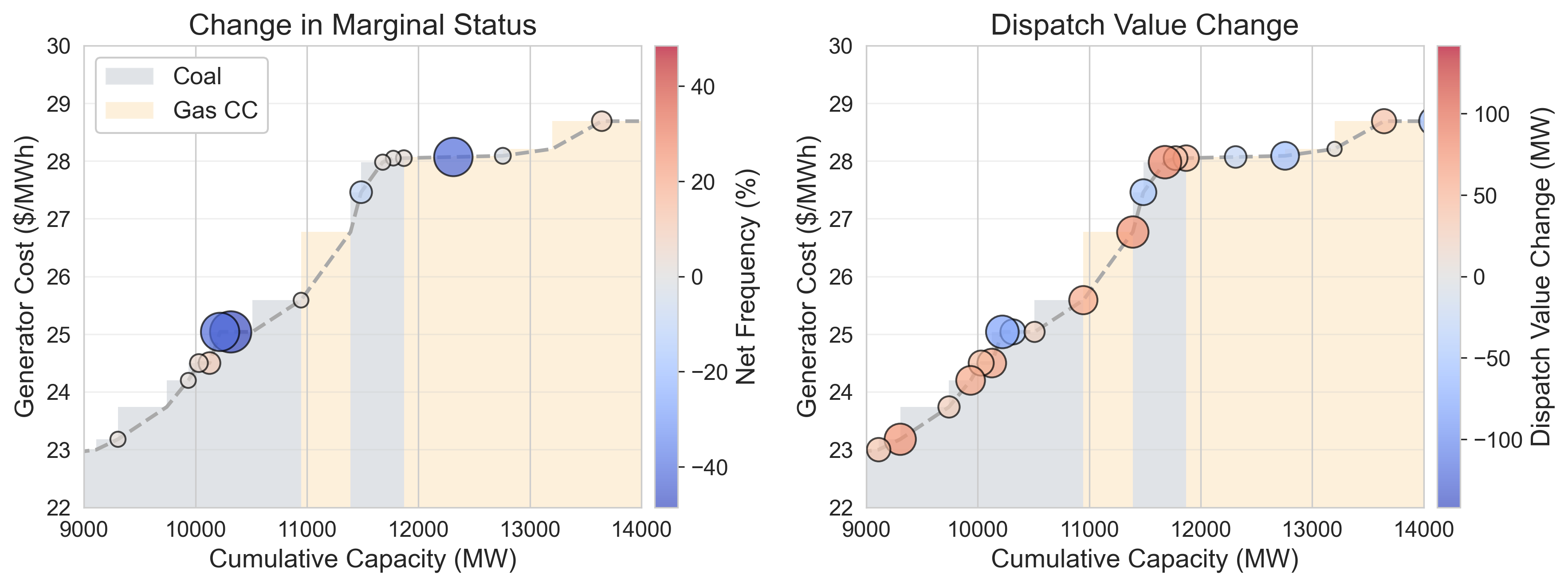}
        \caption{Aligned instances}
        \label{fig:aligned_merit_order}
    \end{subfigure}

    \caption{Impacts of decentralized shifting on generator marginal status and dispatch outcomes over a segment of the merit order.}
    \label{fig:merit_order}
\end{figure}

Proposition~\ref{prop:necessary_condition_for_misalignment} states that adverse system-level outcomes of price-anticipatory load shifting arise through interactions with discontinuities in the generation merit order, rather than through smooth redispatch within a fixed operating regime. Figure~\ref{fig:misaligned_merit_order} illustrates this mechanism. In misaligned instances, load shifting concentrates near steep ``cliffs'' in the merit order, where small changes in net load induce discrete changes in marginal generators and prices. Strategic reallocation suppresses generators near these discontinuities from the marginal set while increasing reliance on higher cost generators that are already marginal elsewhere, reducing flexible consumer procurement costs while increasing total system operating cost. By contrast, aligned instances (Fig.~\ref{fig:aligned_merit_order}) are characterized by redispatch that occurs ``downhill'' along the merit order, so that displaced generation is replaced by lower cost resources and total system operating cost falls.

\section{Conclusion}
This paper examines the system-level impacts of price-anticipatory spatial load shifting by large flexible consumers, modeling the interaction between a flexible consumer and the system operator as a Stackelberg game. We show theoretically that market power is necessary for misalignment: consumer and system objectives are aligned whenever feasible load shifts stay within a single operating regime, and misalignment arises only when shifting changes the set of congested lines or marginal generators. Computational results on the RTS-GMLC test system support these insights: decentralized shifting delivers aggregate system cost savings comparable to a centralized benchmark, while misalignment occurs in a subset of hours driven by operating regime transitions. These effects are primarily redistributive, shifting surplus across market participants with relatively small impacts on aggregate system efficiency.

Future work will extend these results to joint spatial and temporal flexibility for multi-period shifting, emissions-based shifting---which requires characterizing how endogenous dispatch and prices respond to emissions signals across operating regimes---and to settings with multiple large flexible consumers, where interaction may induce alignment or compound misalignment.

\section*{Acknowledgements}
This work was supported by the Future Energy Systems Center through the MIT Energy Initiative.

\section*{AI Usage Disclosure}
Large language models were used as a support tool for drafting and editing text and for coding-related tasks.

\bibliographystyle{ieeetr}
\bibliography{bibliography}

\appendix

\subsection{Proof of Proposition~\ref{prop:sufficient_condition_for_alignment}}\label{sec:sufficient_condition_for_alignment_proof}
\begin{proof}
Since $\mathcal{F}(\alpha) \subset \mathcal{R}(\mathcal{A}_0)$, the LMP vector is constant throughout $\mathcal{F}(\alpha)$; denote this constant vector by $\lambda$. Uniqueness of the LMP vector is not required here: within $\mathcal{R}(\mathcal{A}_0)$, all dual optimal solutions associated with the nodal balance constraints remain constant with respect to $\delta$. By the envelope theorem, the gradient of the system operating cost $V$ with respect to the load shift $\delta$ satisfies
\begin{align*}
\nabla_\delta V(\delta) = \frac{\partial \mathcal{L}}{\partial \delta} = \lambda, \qquad \forall \delta\in\mathcal{F}(\alpha),
\end{align*}
where the last equality follows from the fact that $\delta$ enters the DC-OPF only via the balance constraint \eqref{eq:energy_balance}.

The procurement cost is given by $\Pi(\delta) = \lambda^\top (d_\mathrm{flex} + \delta)$, and hence $\nabla_\delta \Pi(\delta) = \lambda$. Define $h(\delta) \coloneqq V(\delta) - \Pi(\delta)$. Then $\nabla h(\delta) = 0$ for all $\delta \in \mathcal{F}(\alpha)$, implying that $h$ is constant on $\mathcal{F}(\alpha)$. Therefore, there exists a constant $k$ such that
\begin{align*}
V(\delta) = \Pi(\delta) + k \qquad \forall \delta\in\mathcal{F}(\alpha).
\end{align*}
Consequently,
\begin{align*}
\argmin_{\delta\in\mathcal{F}(\alpha)} V(\delta) = \argmin_{\delta\in\mathcal{F}(\alpha)} \Pi(\delta),
\end{align*}
which establishes the claim.
\end{proof}

\subsection{Proof of Proposition~\ref{prop:necessary_condition_for_misalignment}}
To prove Proposition~\ref{prop:necessary_condition_for_misalignment}, it is helpful to first observe that, for misalignment to occur, the optimal load shift must also be misaligned with respect to the LMPs for the corresponding active-set region.
\begin{lemma}[Dual subgradient inequality]\label{lem:dual_subgradient}
Let $\delta^*$ satisfy $V(\delta^*)>V(0)$.
Let $\lambda$ be any optimal dual variable associated with the DC-OPF solution under load shift $\delta^*$
(equivalently, any $\lambda\in \partial_\delta V(\delta^*)$).
Then $\lambda^\top \delta^*>0$.
\end{lemma}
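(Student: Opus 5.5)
The plan is to use convexity of the value function $V$ together with the identification of DC-OPF duals as subgradients of $V$. Since the DC-OPF \eqref{eq:dcopf} is a linear program in which $\delta$ enters only through the right-hand side of the nodal balance constraint \eqref{eq:energy_balance}, $V$ is a convex, piecewise-affine function of $\delta$ on its (closed, polyhedral) domain of feasibility. I will first establish the parenthetical claim: every optimal dual $\lambda$ for the balance constraint at load $d+\delta^*$ is a subgradient of $V$ at $\delta^*$.

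To see this, write the Lagrangian dual of \eqref{eq:dcopf} at a general shift $\delta$. Weak duality gives, for any dual-feasible multiplier tuple $(\eta,\lambda,\mu^\pm,\pi^\pm,\nu)$, the bound $V(\delta)\ge g(\text{multipliers};\delta)$. Here $g$ is the dual objective, and it is affine in $\delta$ with coefficient $\lambda$ on $\delta$; the sign convention is fixed by how \eqref{eq:energy_balance} is written, $A^\top f+d+\delta=Gp$, so that increasing load raises cost with slope $\lambda$. Dual feasibility (the stationarity conditions in $p,\theta,f$) does not depend on $\delta$. Hence a multiplier tuple that is optimal at $\delta^*$ remains feasible at every $\delta$. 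By strong duality at $\delta^*$ (the LP is feasible and bounded, so strong duality holds) we have $g(\cdot;\delta^*)=V(\delta^*)$. Combining the two facts yields
\begin{align*}
V(\delta)\;\ge\;V(\delta^*)+\lambda^\top(\delta-\delta^*)\qquad\forall\delta .
\end{align*}
The step needing the most care is checking the sign of $\lambda$ in the dual objective against the paper's convention that $\lambda$ is the LMP, i.e.\ the marginal cost of load. I would verify this by noting that the envelope argument in the proof of Proposition~\ref{prop:sufficient_condition_for_alignment} already uses $\nabla_\delta V=\lambda$ with this same sign.

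Then specialize to $\delta=0$, which is feasible by the standing assumption $q(\alpha)\ge 0$; the inequality above does not even require $0\in\mathcal F(\alpha)$, only that the DC-OPF at $0$ is feasible, which holds for the baseline. We obtain $V(0)\ge V(\delta^*)-\lambda^\top\delta^*$, i.e.\ $\lambda^\top\delta^*\ge V(\delta^*)-V(0)>0$ by hypothesis. This gives the strict inequality for every optimal dual $\lambda$, as claimed.

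I expect no real obstacle beyond the bookkeeping of the dual. If one prefers to avoid writing the full dual, an alternative is to invoke the standard LP sensitivity result that the set of optimal dual multipliers on an equality right-hand side equals $\partial V(\delta^*)$, and then apply the subgradient inequality directly.
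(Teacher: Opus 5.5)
Your proposal is correct and takes essentially the paper's approach: the paper also applies the subgradient inequality $V(\delta)\ge V(\delta^*)+\lambda^\top(\delta-\delta^*)$ at $\delta=0$ to obtain $\lambda^\top\delta^*\ge V(\delta^*)-V(0)>0$. The only difference is that you derive from weak and strong LP duality that every optimal dual is a subgradient, whereas the paper simply asserts this identification from convexity of the LP value function.
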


\begin{proof}[Proof of Lemma~\ref{lem:dual_subgradient}]
Since $V$ is the value function of a linear program with respect to a right-hand-side perturbation,
it is convex in $\delta$. Hence for any $\delta$ and any $\lambda\in \partial_\delta V(\delta^*)$,
\begin{align*}
V(\delta)\;\ge\; V(\delta^*) + \lambda^\top(\delta-\delta^*).
\end{align*}
Choosing $\delta=0$ gives
\begin{align*}
V(0)\;\ge\; V(\delta^*) - \lambda^\top \delta^*,
\end{align*}
so $\lambda^\top\delta^* \ge V(\delta^*)-V(0)>0$.
\end{proof}

\begin{proof}[Proof of Proposition~\ref{prop:necessary_condition_for_misalignment}]
Let $\delta^*$ be an optimal solution of the flexible consumer problem and suppose $V(\delta^*)>V(0)$.
Let $\mathcal R^*$ denote the (single) active-set region containing $\delta^*$ and let $\lambda$ be the (constant)
LMP vector on $\mathcal R^*$.

Consider the linear program obtained by fixing prices to $\lambda$ and restricting $\delta$ to $\mathcal R^*$:
\begin{subequations}\label{eq:critical_region_shifting}
\begin{align}
    \min_\delta \quad & \lambda^{\top} (d_\mathrm{flex} + \delta) \\
    \mathrm{s.t.} \quad & T(\alpha)\delta \leq q(\alpha), \label{eq:critical_region_flexibility} \\
    & 1^\top \delta = 0, \\
    & A_{\mathcal{R}^*} \delta \leq b_{\mathcal{R}^*}, \label{eq:critical_region_bounds}
\end{align}
\end{subequations}
where \eqref{eq:critical_region_bounds} defines $\mathcal R^*$, without loss of generality, as inequalities. Since $\delta^*\in \mathcal R^*$ and $\lambda$ is constant on $\mathcal R^*$, $\delta^*$ must satisfy the KKT optimality conditions of \eqref{eq:critical_region_shifting}.

Let $\gamma, \rho\ge 0$ denote the dual variables for \eqref{eq:critical_region_flexibility} and \eqref{eq:critical_region_bounds}, and let $\kappa\in\mathbb R$ denote the dual variable for $1^\top\delta=0$. The stationarity condition is given by
\begin{align}
\lambda + T(\alpha)^\top\gamma + \kappa 1 + A_{\mathcal R^*}^\top\rho = 0. \label{eq:stationarity}
\end{align}
If $\delta^*$ does not lie on the boundary between two or more active-set regions, then it lies in the interior of $\mathcal R^*$,
so $A_{\mathcal R^*}\delta^* - b_{\mathcal R^*}<0$ and complementary slackness implies $\rho=0$.
Thus \eqref{eq:stationarity} becomes
\begin{align*}
\lambda = -T(\alpha)^\top\gamma - \kappa 1.
\end{align*}
Taking the inner product with $\delta^*$ and using $1^\top\delta^*=0$ yields
\begin{align*}
\lambda^\top\delta^* = -\gamma^\top T(\alpha)\delta^*.
\end{align*}
By complementary slackness for constraint \eqref{eq:critical_region_flexibility},
\begin{align*}
\gamma^\top(T(\alpha)\delta^* - q(\alpha))=0
\quad\implies\quad
\gamma^\top T(\alpha)\delta^* = \gamma^\top q(\alpha),
\end{align*}
so
\begin{align*}
\lambda^\top\delta^* = -\gamma^\top q(\alpha)\le 0,
\end{align*}
since $\gamma\ge 0$ by dual feasibility and $q(\alpha)\ge 0$ by assumption. This contradicts Lemma~\ref{lem:dual_subgradient}, which implies
$\lambda^\top\delta^*>0$ when $V(\delta^*)>V(0)$.
Therefore, $\delta^*$ must lie on the boundary between two or more operating regimes.
\end{proof}

\subsection{Single-level Reformulation of \eqref{eq:load_shifting}}\label{app:reformulation}
The single-level reformulation of \eqref{eq:load_shifting} is given as follows:
\begin{subequations}\label{eq:reformulation}
\begin{align}
    \min \quad 
    & \lambda^\top\big(d_\mathrm{flex} + \delta\big) 
    \label{eq:reform_obj} \\
    \mathrm{s.t.} \quad 
    & (p,\theta,f) \ \text{satisfy \eqref{eq:flow_angles}--\eqref{eq:dcopf_refbus}} \label{eq:reform_primal} \\
    & -\lambda^\top G + \pi^+ - \pi^- = c & \label{eq:reform_stat_p} \\
    & -\eta^\top (BA) + \nu e_s^\top = 0 \label{eq:reform_stat_theta} \\
    & \eta^\top + \lambda^\top A^\top + \mu^+ - \mu^- = 0 \label{eq:reform_stat_f} \\
    & \mu^+, \mu^-, \pi^+, \pi^- \geq 0 & \label{eq:reform_dual_feas} \\
    & c^\top p = -\lambda^\top (d+\delta) + (\mu^+ - \mu^-)^\top \bar{F} + (\pi^+)^\top \bar{p} & \label{eq:reform_strong_duality} \\
    & \delta \in \mathcal{F}(\alpha) \label{eq:reform_delta_feas}.
\end{align}
\end{subequations}

Constraints~\eqref{eq:reform_primal} enforce primal feasibility of the market clearing problem, while \eqref{eq:reform_stat_p}--\eqref{eq:reform_dual_feas} impose dual feasibility. Strong duality is enforced via \eqref{eq:reform_strong_duality}, ensuring that the primal dispatch and dual price variables are mutually optimal for the realized load vector $d + \delta$. The upper-level decision variable $\delta$ enters the problem through the load balance constraint~\eqref{eq:energy_balance}, the strong duality constraint~\eqref{eq:reform_strong_duality}, and the flexible feasibility set~\eqref{eq:reform_delta_feas}, and the objective~\eqref{eq:reform_obj}. As a result, Problem~\eqref{eq:reformulation} is \emph{nonconvex} due to the bilinear terms \eqref{eq:reform_obj} and \eqref{eq:reform_strong_duality}. We solve the resulting bilinear program using Gurobi~12’s nonconvex quadratic solver, which employs a spatial branch-and-bound framework with convex relaxations \cite{gurobi}.

\subsection{RTS-GMLC Computational Details}
For linear offer curves, we utilize each generator's average heat rate (\verb|HR_avg_0|) from the RTS-GMLC data to compute a constant marginal cost in \$/MWh as:
\begin{equation}
c_g = \frac{p_g^{\text{fuel}}}{10^6} \times 10^3 \times \text{HR}_g
\end{equation}
where $p_g^{\text{fuel}}$ is the fuel price in \$/MMBTU and $\text{HR}_g$ is the average heat rate for generator $g$. This formulation yields a single linear segment for each generator's offer curve rather than the multi-segment piecewise linear curves available in the full RTS-GMLC dataset. For VRE generation resources, we apply time-varying upper bounds using the 5-minute resolution timeseries data provided with RTS-GMLC, aggregated to hourly resolution by averaging.

\subsection{Cost and Profit Outcomes Boxplot}\label{app:figures}

\begin{figure}[H]
    \centering
    \includegraphics[width=\columnwidth]{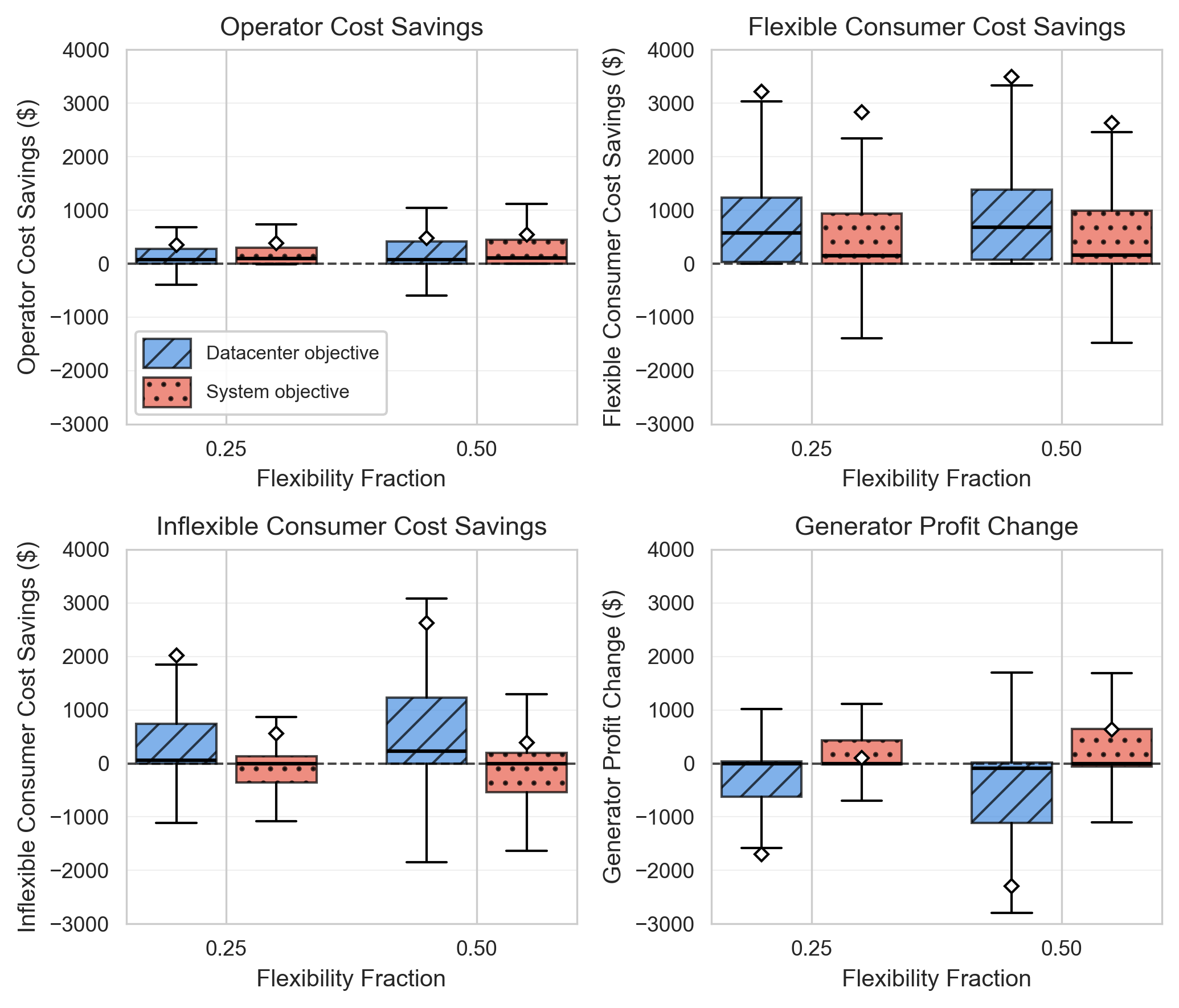}
    \caption{Hourly cost and profit impacts of spatial load shifting under $\alpha=0.25$ and $\alpha=0.5$. Boxplots compare decentralized and system-optimal outcomes relative to the no-flexibility baseline.}
    \label{fig:cost_savings}
\end{figure}

\balance

\endgroup
\end{document}